\def\bSig\mathbf{\Sigma}
\title{A Guided FP-growth algorithm for multitude-targeted mining of big data}
\author[1]{Lior Shabtay \thanks{lior.shabtay@yahoo.com}}
\author[1,2]{Rami Yaari \thanks{ramiyaari@gmail.com}}
\author[1]{Itai Dattner\thanks{idattner@stat.haifa.ac.il}}
\affil[1]{Department of Statistics, University of Haifa, 199 Abba Khoushy Ave, Mount Carmel, Haifa 3498838, Israel}
\affil[2]{Bio-statistical and Bio-mathematical Unit, The Gertner Institute for Epidemiology and Health Policy Research, Chaim Sheba Medical Center, Tel Hashomer, 52621 Israel}
\begin{document}

\maketitle

\begin{abstract}
In this paper we present the GFP-growth (Guided FP-growth) algorithm, a novel method for multitude-targeted mining: finding the count of a given large list of itemsets in large data. The GFP-growth algorithm is designed to focus on the specific multitude itemsets of interest and optimizes the time and memory costs. We prove that the GFP-growth algorithm yields the exact frequency-counts for the required itemsets. We show that for a number of different problems, a solution can be devised which takes advantage of the efficient implementation of multitude-targeted mining for boosting the performance. In particular, we study in detail the problem of generating the minority-class rules from imbalanced data, a scenario that appears in many real-life domains such as medical applications, failure prediction, network and cyber security, and maintenance. We develop the Minority-Report Algorithm that uses the GFP-growth for boosting performance. We prove some theoretical properties of the Minority-Report Algorithm and demonstrate its performance gain using simulations and real data. 
\end{abstract}

\smallskip
\noindent \textbf{Keywords.} 
data mining, itemset discovery, multi-targeted mining, imbalanced data, minority-class rules, guided FP-growth.

\section{Introduction}\label{s:intro}

This paper is concerned with the problem of targeted mining of itemsets from a given dataset (e.g., \cite{kubat2003itemset}, \cite{lavergne2012min}, \cite{fournier2013meit}, \cite{lewis2016enhancements}, \cite{li2006searching}, \cite{yakout2007mining}, \cite{ghanem2011edp}, \cite{ghanem2014towards}). In particular, we develop a
{\it multitude-targeted mining} approach for counting the number of occurrences of a given large list of itemsets in large data.

The main contribution of this paper is a novel procedure for focused mining of an FP-tree, which we call the {\it Guided FP-growth} procedure or in short GFP-growth. The GFP-growth procedure serves for mining the support of multitude pre-specified itemsets from an FP-tree. The procedure mines only the relevant parts of the FP-tree, in a single and partial FP-growth process, concurrently collecting the support of all specified itemsets. 

Due to its popularity, availability, and many actual-use implementations, we chose to base our work on the original FP-growth algorithm (\cite{han2000mining} and \cite{han2004mining}). Furthermore, many of the works that attempt to improve the FP-growth algorithm (using improved data structures, improved procedures for mining them, parallelization, etc., see e.g.,\cite{wang2002top}, \cite{fournier2013meit}, \cite{wei2008effective}) can also be applied to the GFP-growth procedure presented below, providing additional time and memory costs reduction. In addition, using the FP-growth leads us to consider the targeted mining problem as a constraint which specifies the subset of itemsets which are interesting (see e.g. \cite{aggarwal2016frequent}). This gives rise to further potential advantages. 

We demonstrate that the GFP-growth procedure is a very fast and generic tool, which can be applied to many different purposes. In particular, we study in detail the problem of mining minority-class rules from imbalanced data, a scenario that appears in many real-life domains such as medical applications, failure prediction, network and cyber security, and maintenance. We develop the Minority-Report Algorithm that uses the GFP-growth for boosting performance. We prove some theoretical properties of the Minority-Report Algorithm and demonstrate its performance gain using simulations and real data. 

Further emphasizing the usefulness and applicability of our efforts to improve multitude targeted mining are the works \cite{li2006searching}, \cite{yakout2007mining}, \cite{ghanem2011edp}, \cite{ghanem2014towards}. Indeed, these works take advantage of targeted mining in itemset tree, as a tool used by algorithms for solving problems like frequent itemset mining and association rule mining. Each of these algorithm makes multitude calls to targeted
mining, hence the need for an improved  multitude targeted mining algorithms. 

The paper is organized as follows. In Section~\ref{s:fim} we briefly review related literatures. In Section~\ref{s:gfp} we introduce the Guided FP-growth algorithm and explore its theoretical properties. In Section~\ref{s:unbal} we study in detail the problem of mining minority-class rules from imbalanced data. Section~\ref{s:app} includes a variety of additional examples for which the GFP-growth algorithm can be applied in order to potentially boost performance. A summary of our main contributions, and suggestions for future research are presented in Section~\ref{s:con}. 

\section{Related work}\label{s:fim}

Previous papers studying targeted mining (\cite{kubat2003itemset}, \cite{yakout2007mining}, \cite{fournier2013meit}) use an 'itemset tree' which is similar in concept to the FP-tree (\cite{han2000mining}, \cite{han2004mining}). These algorithms mine the itemset tree given a single target itemset at a time as input. Depending on the items comprising that input target-itemset, each invocation of the procedure potentially traverses a considerable part of the itemset tree. In case a large number of target-itemsets need to be mined, the different respective invocations may overlap, resulting in increased time complexity. In addition, the data for each and every target-itemset is collected from many different locations in the complete itemset tree, again leading to increased complexity, as opposed to collecting data from reduced conditional-trees in FP-growth. 

Some of the works (e.g. \cite{kubat2003itemset}, \cite{lewis2016enhancements}) suggest an itemset mining procedure over itemset trees, in which, given a single itemset as a parameter, all frequent supersets of that itemset are found and counted.

The itemset tree is different from an FP-tree in two aspects. First, it represents the entire dataset, including all items, regardless of their frequency, in this way the itemset tree can be reused for different types of queries, e.g. with different min-support. Second, the item-set tree is arranged according to a pre-defined lexical order of the items. This pre-defined order is actually used by the targeted mining procedures. 
These differences make the FP-tree more compact than an itemset tree, improving the performance due to building the tree from the frequent-items only from the start, and optimizing the order in which the items are used for building the tree. This comes on the expense of an additional pass through the database and less reusability of the tree for potential other required queries. The targeted itemset tree mining procedures also do not involve recursion and creation of conditional sub-trees which are required by the FP-growth procedure, but this comes with the penalty of reduced performance when needing to mine a lot of itemsets, due to the need to traverse the tree separately for each one.

\cite{lavergne2012min} presents an addition of per sub-tree information regarding the minimum and maximum item-id appearing in it (according to the lexical order). This information is used for improving the time complexity of the targeted mining procedure by allowing it to make smarter decisions as to whether it is required to check each sub-tree.
\cite{fournier2013meit} presents a memory-usage related optimization of \cite{kubat2003itemset}, in which each single-prefix path portion of the tree is represented by a single node. Observe that the same optimization can also be applied on an FP-tree.

One can consider the targeted mining problem as a constraint which specifies the subset of itemsets which are interesting (\cite{Pei:2002:CFP:568574.568580}, \cite{Pei2004}, \cite{1316834}, \cite{5066775}, \cite{NGUYEN2015115}, \cite{4603514}, and \cite{aggarwal2016frequent}). Constrained mining is generally used for specifying limitations on the search space in which to look for the itemsets of interest. In most cases, these limitations are manually set, and reflect the preferences of the user. Item constraints are specific type of constraints which specify limitations on the items and item combinations comprising the itemsets of interest. Most works focus on the case in which the itemsets of interest are the supersets of itemsets in the constraint list. Many existing item-constraint related works are based on the Apriori principle of iterative candidate generation and enumeration. Other existing works do not try to address the scalability of constraints number. One FP-growth based work that does address the scalability of item constraints number is \cite{1316834}. Still its scalability is not optimal for our purposes due to checking all the subsets of each itemset-constraint, and due to executing an additional FP-growth in parallel to the one executed over the dataset. \cite{10.1007/978-3-540-24775-3_19} provide a more generic framework, with similar scalability characteristics. 

A different approach is to use a generic algorithm, and adapt it to solve the specific problem at hand. \cite{Pei:2002:CFP:568574.568580} and \cite{Pei2004} provide a generic extention for pattern-growth algorithms, like the FP-growth, supporting convertible monotone and anti-monotone constraints and pushing such constraints deep into the pattern-growth process. One issue with this approach is that the multitude-targeted itemset mining problem is not convertible monotone and not convertible anti-monotone: for example, a convertible anti-monotone procedure, adapted to mine the target itemsets has to include all their prefixes as well. Therefore, such a procedure will explore, count, and output all the target itemsets as well as all their prefixes, which will then need to be removed e.g. at a postprocessing stage. Additional penalty comes from the generic nature of the solution: involving multitude calls to the anti-monotone boolean function (where the complexity of each call depends on the respective itemset length), and missing the problem-specific optimizations and adaptations, such as more focused search and data-reduction.

In the next section we describe the Guided FP-growth procedure, which is specifically adapted and optimized for mining multitude targeted-itemsets. 
\section{The Guided FP-growth procedure}\label{s:gfp}

\subsection{Background}
As mentioned above, the Guided FP-growth procedure developed in this work is based on the original FP-growth algorithm, so we first briefly review some subjects in the context of mining frequent itemsets and association rules. In particular, note that the classical data-mining algorithms such as Apriori \cite{agrawal1994fast} and FP-growth (\cite{han2000mining}, \cite{han2004mining}) are designed for finding \textbf{all} frequent itemsets that comply to a specified minimum support. The problem of multitude-targeted mining requires focusing on a specific set of itemsets, and thus the performance of such algorithms is not optimal in this case, especially when the number of different items is large, the number of items per transaction is large, and the minimum-support is low. 

The Apriori algorithm of \cite{agrawal1994fast} is one of the most popular methods for itemset mining. Many studies adopt an Apriori-like approach, which is based on an anti-monotone heuristic, as presented in \cite{agrawal1994fast}: if an itemset of length $k$ is not frequent (i.e. its support is less than the so called {\it min-support}), the support of any of its length $k+1$ supersets is not frequent as well. The Apriori algorithm iteratively generates candidate itemsets of length $k+1$ from the set of the frequent itemsets of length $k$, and checks their corresponding support in the database. Apriori-like algorithms have two main drawbacks which may degrade their performance: the number of candidate itemsets at each iteration may become huge; the database needs to be scanned at each iteration, where the number of iterations is equal to the length of the longest mined frequent itemset.

The FP-growth (frequent-pattern growth) algorithm, presented in \cite{han2000mining} and \cite{han2004mining} does not require candidate generation and scans the database twice, thus resolving the two drawbacks of the Apriori approach. 

The FP-growth algorithm consists of two main steps: First, the content of the database is compressed into a special tree structure called {\it FP-tree}. Then, the FP-tree is mined by an algorithm called FP-growth to extract all frequent itemsets. 

The FP-tree is constructed in two steps: in the first step, the database is scanned in order to decide which of the items is frequent. Only frequent items are used when building the FP-tree. In the second step the database is scanned again and the actual tree is constructed. The tree nodes are arranged according to the represented-item frequency, such that nodes representing more frequent items have a better chance to be merged.

The FP-growth step of the algorithm mines the FP-tree and extracts the frequent itemsets. The process uses a divide-and-conquer recursive approach: it loops through the frequent items in ascending-support order, and for each such item, it first reports its support and then recursively mines the conditional FP-tree which represents only transactions containing that item.

The FP-growth algorithm may still have high run-time and memory costs. When used for solving the problem of multitude-targeted mining, we have an opportunity to reduce the costs, since the target itemsets are specified, which opens the door to devise a more focused mining effort. Below we present the GFP-growth algorithm, which reduces the time required for the procedure to complete as well as the memory costs, by focusing the efforts to what is required for mining the targets of interest.

For this purpose, we suggest implementing an itemset tree, which we call {\it TIS-tree}, representing the pre-specified itemsets for which we would like to mine the support from the {\it FP-tree}. The {\it TIS-tree} is built according to the order of the pattern-growth process, which in the case of FP-growth is the reverse item-ordering of the one used for building the {\it FP-tree} (i.e. support-ascending order). The {\it TIS-tree} does not need to include itemsets which are known in advance not to appear in the {\it FP-tree}, e.g. those containing items which do not appear in the FP-tree. Each node of the {\it TIS-tree} contains a flag, called 'target', indicating whether this node represents a target itemset.

The following are the highlights of the GFP-growth algorithm, and the way it uses the {\it TIS-tree} for solving the multitude targeted-itemsets mining problem:
\begin{enumerate}
\item
The flow of the GFP-growth procedure follows the {\it  TIS-tree} at each step of its execution. In this way, the GFP-growth procedure actually performs a partial walk over {\it TIS-tree}, focusing the process on the target-itemset tree, which is generally smaller than the {\it FP-tree}.
\item
Since the algorithm follows the {\it  TIS-tree}, it consults the {\it FP-tree} header-table (in $O(1)$) before diving into the creation and exploration of the respective conditional {\it FP-tree} and sub {\it  TIS-tree}. 
\item
In case the next {\it TIS-tree} node to process is a leaf (i.e. has no children), no conditional {\it FP-tree} is created and no recursive call is conducted. This check is fast (O$(1)$) with the right data structure - for example having such a flag in each of the {\it TIS-tree} nodes.
\item
The algorithm uses information about the sub-tree of the {\it TIS-tree} in order to conduct data-reduction when building the conditional {\it FP-tree}. Specifically, the conditional {\it FP-tree} does not need to include items which do not appear in the respective {\it TIS-tree} sub-tree. The optimized conditional {\it FP-tree} creation procedure skips these items when creating the conditional {\it FP-tree}.
\item
We let the GFP-growth procedure provide its results by updating a counter, representing the itemset frequency-count, inside each of the respective nodes of {\it TIS-tree}. This saves the work required for building a separate data strucuture for holding the results.
\item
The following optimization takes advantage of the fact that in many applications the GFP-algorithm is required not to apply a min-support constraint (and should only be applied in these applications): in case the node representing the currently processed prefix is not a target node, there is no need to calculate the support for the itemset represented by that node. For each such prefix, this enhancement eliminates the need for the count-calculation work, which potentially requires going through the linked-list of curerntly processed item in the FP-tree.
\end{enumerate}

Observe that the above optimizations are all based on $O(1)$ checks added at different steps of the algorithm, and therefore do not hurt the scalability and performance even in the improbable extreme worst case. For this reason, these optimizations well fit the multitude-targeted mining problem requirements.

\subsection{GFP-growth - detailed}\label{s:gfp1}

The guided FP-growth procedure serves for mining the support of multitude pre-specified itemsets from an {\it FP-tree}.

Let: 
\begin{itemize}
\item 
I = $\{a_1, a_2,\dots, a_m\}$ be a set of $m$ items;
\item 
$DB = \{T_1, T_2, \dots, T_n\}$ a database that contains $n$ transactions, $T_i = \{a_{i1}, a_{i2},\dots, a_{ik}\}$, $a_{ij} \in I$, $k\leq m$ and $i=1,...,n$;
\item 
An itemset $\alpha$ is a set of items $\{a_1, a_2,\dots, a_k\}$, where $k \leq m$;
\item 
The count $C$ (or occurrence frequency) of an itemset $\alpha$, is the number of transactions containing $\alpha$ in $DB$: $C(\alpha) = |{ T_i : \alpha \subseteq T_i }|$;
\item 
The support $S$ of an itemset $\alpha$, is $C(\alpha) / |DB|$.
\end{itemize}

The inputs for a multitude-targeted mining problem are:
\begin{itemize}
\item
An {\it FP-tree} which is the source for the target mining;
\item 
A tree-based data-structure that contains the itemsets which are the mining targets. We call this data-structure {\it TIS-tree} (Target Item-Set Tree). A node in {\it TIS-tree} which represents an itemset $\alpha$ is denoted by {\it TIS-tree}$(\alpha)$.
\end{itemize}

We say that $\alpha \in${\it TIS-tree} if and only if the itemset $\alpha$ was inserted to {\it TIS-tree.}
For each $\alpha \in${\it TIS-tree}, {\it TIS-tree}($\alpha$).{\it g-count} is a counter. This counter is initiated to zero for each $\alpha \in${\it TIS-tree}. The GFP-growth (guided-FP-growth) procedure which is described below updates this counter.
For each $\alpha \in${\it TIS-tree}, {\it TIS-tree}($\alpha$).{\it target} is a boolean flag saying whether this node represents a target itemset, for which data should be collected. 

As discussed above, the GFP-growth procedure takes advantage of a coordinated arrangement of the trees, and performs coordinated exploration of them. Therefore, the {\it TIS-tree} should be built according to the order of the pattern-growth process, which in the case of FP-growth is the reverse item-ordering of the one used for building the {\it FP-tree} (i.e. support-ascending order). In other words, {\it TIS-tree} should be arranged such that for each pair of nodes $a_i$ and $a_j$ in {\it TIS-tree}, where node $a_j$ is a child of node $a_i$, $C(a_j) \geq C(a_i)$. In this way, by following the {\it TIS-tree} in a top-down manner the GFP-growth procedure ensures that the {\it FP-tree} is explored in a bottom-up manner, as done in FP-growth.

The pseudo-code implementing this coordinated {\it TIS-tree} guided {\it FP-tree} exploration is described below. In the pseudo-code:
\begin{itemize}
\item
$a_i \in${\it TIS-tree} means that $a_i$ is a direct child of the root of {\it TIS-tree} - represented by a node in {\it TIS-tree} denoted by {\it TIS-tree}$(a_i)$
\item
$a_i \in${\it FP-tree} means that $a_i$ appears in the header table of {\it FP-tree}, as defined in \cite{han2004mining}.
\item
'$a_i.${\it count} in {\it FP-tree}' means the count of $a_i$ in the database represented by {\it FP-tree}. The implementation of getting this count from {\it FP-tree} is as described in \cite{han2004mining}, which is to follow the linked list starting at the entry of $a_i$ in the header table of {\it FP-tree}, and summing the counts from the visited nodes
\item
The outcome of the procedure is that each entry in {\it TIS-tree} is updated with the frequency of appearance in FP-tree of the respective itemset 
\end{itemize}

Note that the GFP-algorithm code below does not assume a min-support constraint, and indeed such a constraint is not required by the Minority-Report Algorithm use case and other use-cases as suggested in the sequel. The min-support constraint can be added to the algorithm, just as done in \cite{han2004mining}, \cite{Pei:2002:CFP:568574.568580}, and \cite{Pei2004}, and if added, will affect the created conditional-trees, further reducing their size. 

\begin{pseudocode}[ruled]{GFP-growth}{\text {\it TIS-tree}, {\text {\it FP-tree}}}

\FOR \text{each item } a_i \in \text {\it TIS-tree} \DO
      \IF (a_i \in \text {FP-tree}) \THEN \BEGIN
		\IF (\text {\it TIS-tree}(a_i) \text {\it .target}) \THEN
          \text {\it TIS-tree}(a_i) \text {\it .g-count} = a_i. \text{{\it count} in {\it FP-tree}}; \\
		\IF (\text {\it TIS-tree}(a_i) \text { has children}) \THEN \BEGIN
			\text {construct } a_i\text {'s conditional FP-tree {\it c-Tree}};  \\
			\IF \text {c-Tree} \neq \emptyset \THEN \text {call GFP-growth}(\text {\it TIS-tree}(a_i), \text {\it c-Tree}); \\
		\END
\END

\end{pseudocode}

As explained above, the GFP-growth procedure actually performs a partial walk over {\it TIS-tree}, focusing the process on the target-itemset tree, while consulting the {\it FP-tree} header-table (in $O(1)$) before performing each recursive call. For example, let $a_i$ be an item attached to the root of the {\it TIS-tree}. The GFP-growth procedure processes the node $a_i$ in the loop of its first, 'outmost' invocation, and then it creates a conditional tree for $a_i$ and uses it for invocating a recursive call. Let $a_j$ be an item processed in the loop of this recursive call, which means that $a_j$ is attached to $a_i$ in {\it TIS-tree}. Since the order of {\it TIS-tree} is the reverse of the one in the {\it FP-tree}, any transaction containing itemset $(a_i, a_j)$ will be reflected in the {\it FP-tree} as a node $a_j$ which is downstream of an $a_i$ node. Therefore that node will appear in the conditional tree of $a_i$. This, in turn, ensures that all occurrences of $(a_i, a_j)$ are correctly counted.

This pseudo code takes advantage of the {\it TIS-tree}-focus and provides an optimization that in case {\it TIS-tree}$(a_i)$ is a leaf (i.e. has no children), no conditional FP-tree is created and no recursive call is conducted.
An additional optimization which takes the same idea further is to use information about the sub-tree {\it TIS-tree}$(a_i)$ in order to optimize the created conditional FP-tree {\it c-Tree}. Specifically, {\it c-Tree} does not need to include items which are not in {\it TIS-tree}$(a_i)$. The optimized conditional FP-tree creation procedure skips these items when creating the conditional FP-tree. For example, the call for "GFP-growth({\it TIS-tree}$({m})$, ${(f:3, c:3, b:1)}$)" in the example given below, can actually be replaced by "GFP-growth({\it TIS-tree}$({m}), {(f:3})$)" since $b$ and $c$ do not appear in the sub-tree {\it TIS-tree}$({m})$.
Such an optimization requires an implementation of {\it TIS-tree} that holds the required information at each node. An example implementation is to maintain a bit-map at each node of the {\it TIS-tree}, with a per-item bit telling whether it appears in the sub-tree rooted by that node. In case the number of different items is large, this bit-map can become too large. To cope with this issue one can replace the bit-map with a hash-table, a linked-list, or a set of ranges. 
Alternatively or additionally, known techniques which reduce the overhead of the conditional FP-trees can be used, e.g. as in \cite{wang2002top}. 

The GFP-growth procedure is a general-purpose procedure which can be used for different purposes (e.g. as part of the Minority-Report Algorithm described below). Regardless of the context of the specific use-case, the GFP-growth updates {\it TIS-tree}$(\alpha)${\it .g-count} for each $\alpha \in${\it TIS-tree} to be the frequency-count of $\alpha$ in its input {\it FP-tree} (e.g. $FP_0$ in the Minority-Report Algorithm below), and the actual meaning and use of this gathered information differs according to the specific use-case.

The following theorem shows the correctness of the GFP-growth procedure, in the sense that the count information is correctly collected for all target itemsets.

\newtheorem{theorem}{Theorem}
\begin{theorem}
Guided FP-growth correctness

At the end of execution of the GFP-growth procedure, which is given an FP-tree {\it FP} and an {\it TIS-tree} as input, {\it TIS-tree}$(\alpha)${\it .g-count = }$C(\alpha)$ for each $\alpha \in${\it TIS-tree}, where $C(\alpha)$ is the count of the itemset $\alpha$ in the database represented by {\it FP}.
\end{theorem}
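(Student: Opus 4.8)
The plan is to prove correctness by induction on the structure of the recursion, which mirrors the depth of nodes in the {\it TIS-tree}. The key observation to establish first is the \emph{invariant} that whenever the GFP-growth procedure is called with arguments ({\it TIS-tree}$(\beta)$, {\it c-Tree}), the {\it c-Tree} passed in is exactly the conditional FP-tree for the prefix $\beta$ — that is, it represents precisely those transactions in the original database that contain all items of $\beta$, restricted to items that can extend $\beta$. Granting this, a call processing item $a_i$ in that invocation sees, via $a_i.${\it count} in {\it c-Tree}, exactly the number of transactions containing $\beta \cup \{a_i\}$, so the assignment to {\it TIS-tree}$(\beta \cup \{a_i\})${\it .g-count} records $C(\beta \cup \{a_i\})$. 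This is the heart of the argument.

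First I would set up the induction carefully. The natural induction is on the height of the sub-{\it TIS-tree} rooted at the node being processed, or equivalently a downward induction following the recursion. For the base/outermost call, the input {\it c-Tree} is the original {\it FP}, which corresponds to the empty prefix $\emptyset$, and $a_i.${\it count} in {\it FP} equals $C(\{a_i\})$ by the standard FP-tree counting fact cited from \cite{han2004mining} (follow the header-table linked list and sum node counts). For the inductive step, I would invoke the standard correctness property of conditional FP-tree construction: the conditional FP-tree built for $a_i$ from a tree representing prefix-$\beta$ transactions represents exactly the transactions containing $\beta \cup \{a_i\}$ (again appealing to \cite{han2004mining}). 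Combined with the coordinated-ordering remark already established in the excerpt — that the reverse ordering of {\it TIS-tree} versus {\it FP-tree} guarantees every occurrence of $(a_i,a_j)$ with $a_j$ a child of $a_i$ appears downstream in the {\it FP-tree} and hence survives into $a_i$'s conditional tree — this shows the invariant propagates to the recursive call.

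Next I would argue \emph{coverage}: every $\alpha \in${\it TIS-tree} is in fact reached and has its counter set. This requires checking that the two $O(1)$ early-exit conditions in the pseudocode do not skip any $\alpha$ that genuinely occurs in the database. The branch ``\IF $a_i \in$ FP-tree'' is safe because if $a_i$ is absent from the current {\it c-Tree} then no transaction of that conditional database contains $a_i$, so $C(\beta \cup \{a_i\}) = 0$, matching the initialized value of the counter. The branch ``\IF {\it c-Tree} $\neq \emptyset$'' is safe for the same reason applied to all deeper descendants: an empty conditional tree means the prefix $\beta \cup \{a_i\}$ occurs in no transaction, whence every strict superset also has count zero and the zero-initialization is already correct. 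The data-reduction optimization (skipping items absent from the {\it TIS-tree} sub-tree when building {\it c-Tree}) must also be checked to be harmless, since those items never appear as descendants in {\it TIS-tree} and so are never consulted.

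\textbf{The main obstacle} I anticipate is making the conditional-tree invariant fully rigorous rather than merely intuitive: precisely stating what ``the conditional FP-tree represents the transactions containing $\beta$'' means as a multiset/count statement, and verifying that the \emph{optimized} construction (with item-skipping and the no-min-support setting) still preserves the exact counts for all itemsets that remain in the {\it TIS-tree} sub-tree. The ordering-compatibility between {\it TIS-tree} and {\it FP-tree} is the linchpin that makes every needed item downstream in the FP-tree, and I would want to state that as an explicit lemma before the induction. Everything else reduces to the known behavior of FP-growth, so the argument is essentially an induction wrapped around the standard conditional-tree correctness fact, with careful bookkeeping to confirm the pruning steps only ever skip genuinely-zero-count itemsets.
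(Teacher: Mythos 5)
Your proposal is correct and follows essentially the same route as the paper's proof: both arguments hinge on the coordinated (reverse) ordering of the \emph{TIS-tree} relative to the \emph{FP-tree}, the standard conditional-tree correctness fact from Han et al.\ (each conditional tree built along the path of $\alpha$ represents all occurrences of $\alpha$), and the observation that every early exit (item absent from the header table, empty conditional tree) only occurs when the true count is zero, matching the counter's initialization. Your version merely makes explicit, via a structural induction and a separate check of the item-skipping optimization, what the paper argues more informally by walking the recursion along the path of $\alpha$.
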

Note: observe that the database represented by {\it FP} does not contain infrequent items which might exist in the original database. 

\begin{proof}

Assume that $C(\alpha) > 0$, 
since {\it TIS-tree} is ordered according to the pattern-growth order, the GFP-growth procedure recursively traverses the path of $\alpha$ in the pattern-growth order. As shown in \cite{han2004mining}, this means that each conditional tree created before reaching the step of adding $a_i \in \alpha$ to the tree-condition, contains a representation of all the occurrences of $\alpha$. This, in turn, means that whenever the code asks if ($a_i \in${\it FP-tree}) for an $a_i \in \alpha$, the answer is yes and the process continues.

Therefore, the complete recursion process continues and the node of $\alpha$ is reached. Now, the code "{\it TIS-tree}$(a_i)${\it .g-count }$= a_i.${\it count} in {\it FP-tree}" is executed. At this point the conditional tree contains a representation of all the transactions which are supersets of $(\alpha - a_i)$, so the operation is actually {\it TIS-tree}$(\alpha)${\it .g-count } $= C((\alpha-a_i)+ a_i) = C(\alpha)$.

In case $C(\alpha) = 0$, the above process will either reach a point in which the conditional tree is empty and therefore {\it TIS-tree}$(\alpha)${\it .g-count} will stay $0$ as initiated, or follow the above process till reaching the step in which it assigns {\it TIS-tree}$(\alpha)${\it .g-count } $= C((\alpha-a_i)+ a_i) = C(\alpha) = 0$.
\end{proof}

In the next section, we demonstrate the power of GFP-growth by showing how it enables devising an efficient solution to the common and practical problem of  minority-class association-rules from imbalanced data. 
\section{Case study: mining minority-class rules from imbalanced data}\label{s:unbal}

In this section we study the problem of mining the minority-class association-rules from imbalanced data. Association rules represent relationships among sets of items in a dataset. An association rule $\alpha \rightarrow \beta$, where $\alpha$ and $\beta$ are disjoint itemsets, represents the likelihood of $\beta$ to occur in a transaction containing $\alpha$. In many implementations, the process of mining association rules is decomposed into two separate steps: 
\begin{enumerate}
\item[(\rm i)] Find the frequent itemsets that comply to a specified minimum support. 
\item[(\rm ii)] Use the frequent itemsets to generate association rules that meet a confidence threshold. 
\end{enumerate}

The concept of using itemset mining for the purpose of finding per-class rules is introduced in \cite{ma1998integrating}. 
Mining of class association-rules can be viewed as a special form of association-rule mining. With classification, it is assumed that each record in the database is assigned with a class. For the purpose of using itemset and association-rule mining in order to derive classification rules, each class is assigned with an item representing it, and each transaction contains exactly one class-item. With such a database, a classification rule-set is a subset of the association rules: those with the specified classes as their consequences.
\cite{ma1998integrating} introduces the notion of {\it ruleitems} for generating class association-rules (CARs) for classification purposes. A {\it ruleitem} is defined as a pair $<${\it condset}, $y>$, representing the classification rule {\it condset }$\rightarrow y$, where {\it condset} is an itemset and $y$ is a single item representing a class.

As explained above, we are interested in mining the class association-rules of specific subset of the classes. Mining such rules has a purpose of its own since it can generate interpretable description of the individual classes. Example use-cases in which such reports provide added value are failure detection and root-cause analysis. Such specific-class reports are most insteresting in imbalanced data scenarios. By imbalanced data we mean that the class distribution is unbalanced, so that the class or classes of interest have a considerably lower probability of occurrence (we call such classes 'minority classes' or 'rare classes'). Scenarios of imbalanced data appear in many real-life domains such as medical applications, failure prediction, network and cyber security, maintenance, etc. 

One way to find the class association-rules of the target class is to use one of the known techniques for finding the {\it ruleitems} of \textbf{all} the classes. Such methods generally use association-rule mining with the min-support set to a sufficiently low value provides itemsets that are correlated with the target class. However, in the case of imbalanced data, when applying class-association rule discovery for mining rules for the rare classes, it is required to set the min-support to a very low value. Unfortunately, class-association rule mining suffers from very low performance when the min-support is low due to the large intermediate data structures created and number of itemsets which pass this weak constraint (most of which are not required when mining rare-class rules). Previous works aiming to resolve this issue mainly focus on Apriori-based algorithms and apply remedies like the use of per-class min-support, new interestingness criteria (e.g. \cite{gu2003association},  \cite{arunasalam2006cccs}, \cite{ndour2012classification}), and mining of the optimal rule set as defined in \cite{li2002mining} (see also \cite{gu2003association}, \cite{ndourclassification}, \cite{ndour2012classification}) in order to reduce the amount of candidates created at each level and the amount of produced rules. 
In \cite{NGUYEN2015107} the authors present an algorithm for mining association rules with class constraints, which can be used for focusing on a specific class.

The above-mentioned algorithms involve an iterative candidate generation and enumeration process, and therefore, their performance is derived from Apriori-like characteristics, meaning that the complete data-set is scanned a number of times and that potentially a huge number of candidates are created during at least some of the iterations. Thus, it makes sense that an approach based on FP-growth will do better in that respect. 

For the purpose of mining the minority-class rules from imbalanced data we develop the {\it Minority-Report Algorithm} (MRA) which is based on the GFP-growth algorithm presented above. We prove some theoretical properties of MRA and demonstrate its superior performance using simulations and real data.  
\subsection{Minority-Report Algorithm (MRA): Mining minority-class rules from imbalanced data using the GFP-growth procedure}
In this subsection, we present a novel FP-growth and GFP-growth based procedure, optimized for mining of rules of a single rare-class from imbalanced data. The procedure uses both the original FP-growth of \cite{han2004mining} and the GFP-growth procedure presented above. The main principles of the new procedure are as follows:
\begin{enumerate}
\item Apply a first pass over the dataset in order to pick only the items which are frequent in the rare class (i.e. the number of rare-class transactions which include each of these items is above the required {\it min-support} threshold), and by this gain a smaller tree-based representation of the database, leading to a large reduction in the usage of time and memory resources.
\item Create two FP-trees, one for the rare class and one for the common class, where the rare-class tree is expected to be considerably smaller.
\item Use the FP-growth procedure of \cite{han2004mining} in order to mine the smaller tree and get the frequent itemsets for the target class, which serve as the antecedent part of all potential rules.
\item Use the Guided FP-growth procedure presented above, in order to mine the frequency of the common class for the itemsets discovered in step 3 above. In this way, the mining of the larger tree is focused and therefore much faster.
\end{enumerate}

We choose to focus on a single rare-class, rather than a number of rare-classes, for the purpose of simplicity. The procedure can be extended to mine a number of rare-classes concurrently by either using per class trees or by maintaining per class counters on each node of a single tree representing all the target classes.

A more formal description of the algorithm is as follows. Let:
\begin{itemize}
\item 
A classification-target item $c$, is an item which is used for classifying transactions. In a basic imbalanced-data scenario we assume that transactions containing this item belong to class $c$, and all others belong to class $0$
\item
An association rule is an implication $\alpha \rightarrow \beta$, where $\alpha$ and $\beta$ are itemsets, and $\alpha \cap \beta = \emptyset$
\item
A classification rule is an association rule $\alpha \rightarrow c$, where $\alpha$ is an itemset, and $c$ is a classification-target item
\item
The support of a rule $\alpha \rightarrow c$ is defined as the support($\alpha \cup c$)
\item
The confidence of a rule $\alpha \rightarrow c$ is defined as support($\alpha \cup c$)/support($\alpha$)
\end{itemize}
In the discussion below, we assume a single target class, denoting the target class as '$1$', i.e. $c=$'$1$'.
\begin{itemize}
\item
$DB_1$ is a subset of $DB$ including only the transactions containing item $1$. $DB_1 = \{T_i \in DB : 1 \in T_i \}$
\item
$DB_0$ is a subset of $DB$ including only the transactions which do not contain the item $1$.
$DB_0 = \{T_i \in DB : 1 \notin T_i \}$
\item
The count $C_1(\alpha)$, is the number of transactions containing $\alpha \cup 1$ in $DB$, $C_1(\alpha) = |\{ T_i : \alpha \cup 1 \subseteq T_i \}|$.
Observe that $C_1(\alpha)$ is also the count of $\alpha$ in $DB_1$, $C_1(\alpha )= |\{ T_i \in DB_1: \alpha \subseteq T_i \}|$
\item
The count $C_0(\alpha)$, is the number of transactions containing $\alpha$ and not $1$ in $DB$,
$C_0(\alpha) = |\{ T_i : \alpha \in T_i \ \& \sim (1 \subseteq T_i )\}|$.
Observe that $C_0(\alpha)$ is also the count of $\alpha$ in $DB_0$, $C_0(\alpha) = |\{ T_i \in DB_0 : \alpha \subseteq T_i \}|$
\end{itemize}

The input for the Minority-Report Algorithm are the transaction database $DB$, the target-class, a minimum support threshold $\xi$, $0 \leq \xi \leq 1$, and a minimum confidence threshold {\it minconf}, $0 \leq$ {\it minconf }$\leq 1$. The output of the algorithm is the set of the target-class rules which confirm to the minimum-support and minimum confidence thresholds. Observe that $\xi$ should be smaller than the relative frequency of the target class in $DB$, i.e. $\xi < |DB_1|/|DB|$, or else no rules will be produced.

We assume an implementation of the FP-growth procedure which inserts each discovered frequent-itemset, along with its frequency-count, into {\it TIS-tree}.

The Minority-Report algorithm uses the {\it TIS-tree} in order to store itemsets which are frequent in $DB_1$. After selecting the items to be used for building the trees as described above, it creates two FP-trees, $FP_0$ which is a representation of $DB_0$, and $FP_1$ which is a representation of $DB_1$. It then applies an FP-growth procedure on $FP_1$ in order to set {\it TIS-tree} to hold the set of frequent itemsets in $DB_1$, and then applies the GFP-growth in order to get their frequency count in $DB_0$.

As already defined above, for each $\alpha \in$ {\it TIS-tree}, {\it TIS-tree}$(\alpha)${\it .g-count}  is a counter, which is initiated to zero and set by the GFP-growth procedure. In the Minority-Report algorithm, the GFP-growth procedure is applied on $DB_0$. We show that after GFP-growth is applied, for each $\alpha \in$ {\it TIS-tree}, {\it TIS-tree}$(\alpha)${\it .g-count} $= C_0(\alpha)$. An additional counter for each $\alpha \in$ {\it TIS-tree} is {\it TIS-tree}$(\alpha)${\it .count}. We assume that the FP-growth procedure, applied by the Minority-Report procedure on $DB_1$, inserts the count it calculated for $\alpha$ to {\it TIS-tree}$(\alpha)${\it .count}. Therefore, after FP-growth is applied, for each itemset $\alpha$ which is frequent in $DB_1$, {\it TIS-tree}$(\alpha)${\it .count} $= C_1(\alpha)$. The two counters are used for calculating the confidence for each reported rule.

\begin{pseudocode}[ruled]{Minority-Report}{DB,\xi, \text {\it minconf}}

C^* = \xi \times |DB| \\
\\
\text{// First pass over the dataset} \\
I' = \{a_k \in I : C_1(a_k) \geq C^*\} \\
DB'_0 = \{T'_i = T_i \cap I' : T_i \in DB_0\} \\
DB'_1 = \{T'_i = T_i \cap I' : T_i \in DB_1\} \\
\\
\text{// Second pass over the dataset} \\
FP_0 = \text{FP-tree}(DB'_0) \\
FP_1 = \text{FP-tree}(DB'_1) \\
\\
\text{// Apply the classical FP-growth on } FP_1 \text{, creating {\it TIS-tree}} \\
\text{\it TIS-tree} = \text{FP-growth}(FP_1, \text{min-count} = C^*) \\
\\
\text{// Apply the GFP-growth procedure on } FP_0 \text{, populating {\it TIS-tree}} \\
\text{GFP-growth}(\text{\it TIS-tree}, FP_0\text{)} \\
\\
\text{// Pruning in order to generate strong rules} \\
\text{// using the well-known technique} \\
\FOR \text{each itemset } \alpha \in \text{\it TIS-tree} \DO \BEGIN
     \text{\it conf} = \text{\it TIS-tree}(\alpha)\text{\it .count }/ \\
     \text{        }(\text{\it TIS-tree}(\alpha)\text{\it .count} + \text{\it TIS-tree}(\alpha)\text{\it .g-count}) \\
     \IF (\text{\it conf}\geq \text{\it minconf}) \THEN \BEGIN
           \text{generate-rule } R=(\alpha \rightarrow 1) \\
           \text{\it support}(R)= \text{\it TIS-tree}(\alpha)\text{\it .count}/|DB| \\
           \text{confidence}(R)= \text{\it conf} \\
\END
\END
\end{pseudocode}

The Minority-Report algorithm can be performance-optimized by coordinating the way in which the data-structures are organized. More specifically, a performance-optimized implementation of Minority-Report should:
\begin{itemize}
\item 
Use identical item-ordering for the two FP-trees. For example, use the item-support over the entire $DB$ for determining the item ordering in both FP-trees (support-descending order). Such an item-ordering ensures that the FP-trees used by the algorithm are as-good-as and comparable to that of an FP-growth applied over the entire $DB$.
\item
Build the {\it TIS-tree}, using support-ascending order, as described in Section~\ref{s:gfp1}. 
\end{itemize}

This coordinated order of the trees optimizes the performance of the construction of the {\it TIS-tree} since it is in-line with the order in which the FP-growth procedure traverses $FP_1$. In more details, let $a_i$ be a frequent item that is processed in the loop of the first 'outmost' invocation of FP-growth. The FP-growth procedure will insert it to the {\it TIS-tree} and attach it to the root node. It will then create a conditional tree for $a_i$, containing only items which come after $a_i$ in the decided order of the pattern-growth process and use it for invocating a recursive call. Let $a_j$ be an item processed in the loop of this recursive call, such that $(a_i, a_j)$ is a frequent itemset. The FP-growth procedure simply adds $a_j$ to the {\it TIS-tree} and attaches it to $a_i$. The process continues this way and the {\it TIS-tree} is created in the pattern-growth process order.

\begin{theorem}\label{th:MR}
At the end of the GFP-growth step of the Minority-Report algorithm execution, {\it TIS-tree} contains all and only the itemsets $\alpha$, such that {\it support}$(\alpha \cup 1) = C_1(\alpha)/|DB| \geq \xi$. For each such itemset $\alpha$, {\it TIS-tree}$(\alpha)${\it .count }$ = C_1(\alpha)$ and {\it TIS-tree}$(\alpha)${\it .g-count }$ = C_0(\alpha)$.
\end{theorem}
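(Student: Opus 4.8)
The plan is to split the statement into three assertions and establish each in turn: (i) the itemsets stored in {\it TIS-tree} are exactly those $\alpha$ with $C_1(\alpha) \geq C^*$; (ii) for each such $\alpha$, {\it TIS-tree}$(\alpha)${\it .count} $= C_1(\alpha)$; and (iii) for each such $\alpha$, {\it TIS-tree}$(\alpha)${\it .g-count} $= C_0(\alpha)$. Throughout I would use the identity $C^* = \xi|DB|$ together with support$(\alpha \cup 1) = C(\alpha \cup 1)/|DB| = C_1(\alpha)/|DB|$, so that the threshold condition support$(\alpha \cup 1) \geq \xi$ is literally $C_1(\alpha) \geq C^*$; this reconciles the support-phrased claim with the count-phrased algorithm.

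The key preliminary is an item-restriction lemma isolating the effect of the first pass. First I would observe that $C_1$ is anti-monotone: for any item $a \in \alpha$ we have $\{T_i : \alpha \subseteq T_i\} \subseteq \{T_i : a \in T_i\}$, hence $C_1(\alpha) \leq C_1(a)$. Consequently any $\alpha$ with $C_1(\alpha) \geq C^*$ has $C_1(a) \geq C^*$ for every $a \in \alpha$, i.e. $\alpha \subseteq I'$. Next I would note that for any $\alpha \subseteq I'$ and any transaction $T_i$, containment is insensitive to the restriction: $\alpha \subseteq T_i \cap I'$ iff $\alpha \subseteq T_i$. Therefore the count of $\alpha$ in $DB'_1$ equals $C_1(\alpha)$, and the count of $\alpha$ in $DB'_0$ equals $C_0(\alpha)$, for every $\alpha \subseteq I'$. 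This is what makes the aggressive, rare-class-driven reduction to $I'$ lossless for the itemsets that matter.

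Assertions (i) and (ii) then follow from the correctness of the classical FP-growth of \cite{han2004mining} applied to $FP_1$ with min-count $C^*$. Since $FP_1$ faithfully represents $DB'_1$, that procedure returns exactly the itemsets whose count in $DB'_1$ is at least $C^*$, recording each count in {\it TIS-tree}$(\alpha)${\it .count}. An itemset $\alpha \not\subseteq I'$ contains an item absent from every $T_i \cap I'$, so it has count $0$ in $DB'_1$ and is never produced, and by the anti-monotone half of the lemma it also fails $C_1(\alpha) \geq C^*$; an itemset $\alpha \subseteq I'$ has count in $DB'_1$ equal to $C_1(\alpha)$. Hence the produced itemsets are precisely $\{\alpha : C_1(\alpha) \geq C^*\}$ with recorded count $C_1(\alpha)$. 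For (iii) I would invoke the GFP-growth correctness theorem proved above on the input $FP_0$: at termination {\it TIS-tree}$(\alpha)${\it .g-count} equals the count of $\alpha$ in the database represented by $FP_0$. Since $FP_0$ is a faithful FP-tree of $DB'_0$ (no min-support pruning is applied, consistent with the min-support-free use of GFP-growth), this is the count of $\alpha$ in $DB'_0$; and because every $\alpha$ now in {\it TIS-tree} satisfies $\alpha \subseteq I'$ by (i), the lemma gives this count as $C_0(\alpha)$.

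I expect the main obstacle to be the subtlety that $I'$ is selected by frequency in the \emph{rare} class $DB_1$, while assertion (iii) concerns counts over $DB_0$: an item may be common in $DB_0$ yet excluded from $I'$ for being rare in $DB_1$, and one must rule out that such exclusions corrupt $C_0(\alpha)$. The resolution is precisely the coupling furnished by (i): no excluded item ever appears in any mined $\alpha$, so the restriction to $I'$ is invisible to the counts of interest in both databases. Stating this coupling cleanly — that membership in {\it TIS-tree} forces $\alpha \subseteq I'$ — is the crux, and once it is in hand both count-preservation claims reduce to the already-established correctness of FP-growth and of GFP-growth.
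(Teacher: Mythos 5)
Your proposal is correct and follows essentially the same route as the paper: the same three-way decomposition (membership, \emph{.count} via classical FP-growth correctness on $FP_1$, \emph{.g-count} via the GFP-growth correctness theorem on $FP_0$), which the paper states as its three lemmas. Your explicit item-restriction lemma (anti-monotonicity of $C_1$ forcing $\alpha \subseteq I'$, and insensitivity of counts to the restriction) is a more rigorous spelling-out of what the paper handles implicitly by observing that the $I'$-selection is exactly the first step of classical FP-growth on $DB_1$ and that ``all items in $\alpha$ were taken into consideration when building $FP_0$.''
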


The claims in this theorem are divided into the following three lemmas, and therefore its correctness is directly derived from their correctness, which is proved below.

\newtheorem{lemma}{Lemma}
\begin{lemma}
At the end of the GFP-growth step of algorithm execution, $\alpha \in${\it TIS-tree} for all the itemsets $\alpha$ for which {\it support}$(\alpha \cup 1) = C_1(\alpha)/|DB| \geq \xi$, and for each such $\alpha$, {\it TIS-tree}$(\alpha)${\it .count }$ = C_1(\alpha)$.
\end{lemma}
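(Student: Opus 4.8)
The plan is to reduce this statement to the established correctness of the classical FP-growth algorithm (as proved in \cite{han2004mining}), after first justifying that the first-pass restriction to the item set $I'$ loses none of the itemsets that are frequent in $DB_1$ and distorts none of their counts. Since the {\it .count} field and the membership of a node in {\it TIS-tree} are written only during the FP-growth step applied to $FP_1$, whereas the subsequent GFP-growth step on $FP_0$ writes only the {\it .g-count} field and adds or removes no nodes, it suffices to establish both claims at the end of the FP-growth step; they then persist unchanged through the GFP-growth step, which is exactly what the phrase ``at the end of the GFP-growth step'' requires.

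First I would prove the losslessness of the first pass by a downward-closure argument. Fix an itemset $\alpha$ with $C_1(\alpha) \geq C^* = \xi \times |DB|$. For any single item $a \in \alpha$, every transaction in $DB_1$ that contains $\alpha$ also contains $a$, so $C_1(a) \geq C_1(\alpha) \geq C^*$, whence $a \in I'$ by the definition of $I'$. As $a$ was arbitrary, $\alpha \subseteq I'$. Consequently, intersecting each transaction with $I'$ cannot remove any item of $\alpha$: for every $T_i \in DB_1$ we have $\alpha \subseteq T_i$ if and only if $\alpha \subseteq T_i \cap I' = T'_i$. Hence the count of $\alpha$ in $DB'_1$ equals $C_1(\alpha)$, so the restriction to $I'$ preserves both the membership in, and the exact count of, every itemset that is frequent in $DB_1$.

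Next I would invoke the correctness of classical FP-growth. By construction $FP_1$ is the FP-tree of $DB'_1$, so running FP-growth on $FP_1$ with min-count $C^*$ discovers exactly the itemsets whose count in $DB'_1$ is at least $C^*$ and reports each with its correct $DB'_1$-count \cite{han2004mining}. By the previous paragraph these are precisely the itemsets $\alpha$ with $C_1(\alpha) \geq C^*$, equivalently {\it support}$(\alpha \cup 1) = C_1(\alpha)/|DB| \geq \xi$, and their reported counts equal $C_1(\alpha)$. Under the assumed implementation, FP-growth inserts each such $\alpha$ into {\it TIS-tree} together with this value in {\it TIS-tree}$(\alpha)${\it .count}, which yields both conclusions of the lemma.

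The main obstacle is the second step: establishing that the pruning to $I'$ is genuinely lossless. Citing FP-growth correctness is immediate, but it only certifies the output relative to $DB'_1$; the substantive content of the lemma is that nothing frequent in $DB_1$ is discarded in passing from $DB_1$ to $DB'_1$. The inequality $C_1(a) \geq C_1(\alpha)$ for $a \in \alpha$ is exactly what bridges this gap, and care must be taken that it is applied at the level of single items, since it is single-item frequency in the rare class that determines membership in $I'$, not the frequency of $\alpha$ as a whole.
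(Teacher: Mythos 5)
Your proposal is correct and is in substance the paper's own proof: the paper simply observes that the first pass, the construction of $FP_1$, and the mining step together constitute \emph{exactly} the three steps of classical FP-growth applied to $DB_1$ with min-count $C^* = \xi \times |DB|$, and then cites the correctness of classical FP-growth from \cite{han2004mining}. Your explicit downward-closure argument that the restriction to $I'$ is lossless (via $C_1(a) \geq C_1(\alpha) \geq C^*$ for each $a \in \alpha$) is precisely the content the paper delegates to that citation --- the first step of classical FP-growth on $DB_1$ \emph{is} the removal of items with $C_1(a) < C^*$ --- so you have merely unpacked the same reduction one level further, which makes the proof more self-contained but not a different route.
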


\begin{proof}
Observe that the way {\it TIS-tree} is created by the algorithm is by applying the exact three steps of the classical FP-growth algorithm on $DB_1$ with {\it min-count} of $\xi \times |DB|$. Since $C_1(\alpha)$ is the count of $\alpha$ in $|DB_1|$, and since $C_1(\alpha) = C(\alpha \cup 1) \geq \xi \times |DB|$, the correctness of the classical FP-growth leads to the conclusion that $\alpha$ is in TIS-tree and that {\it TIS-tree}$(\alpha)${\it .count }$ = C_1(\alpha) = C(\alpha \cup 1)$.
\end{proof}

\begin{lemma}
At the end of the GFP-growth step of algorithm execution, each of the elements of {\it TIS-tree} represents an itemset $\alpha$ for which {\it support}$(\alpha \cup 1) = C(\alpha \cup 1)/|DB| \geq \xi$.
\end{lemma}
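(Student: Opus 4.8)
The plan is to establish the converse of Lemma 1: every itemset that ends up in {\it TIS-tree} is frequent in the target class, i.e. $C(\alpha \cup 1) \geq \xi \times |DB| = C^*$. The starting observation is that, as in the Minority-Report pseudo-code and in the proof of Lemma 1, {\it TIS-tree} is populated solely by the classical FP-growth procedure applied to $FP_1$ with {\it min-count} $= C^*$; the subsequent GFP-growth step only updates the {\it g-count} fields and never inserts new nodes. Hence it suffices to analyse the output of that single FP-growth invocation.

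First I would invoke the correctness of the classical FP-growth of \cite{han2004mining}: every itemset $\alpha$ that it reports (and therefore inserts into {\it TIS-tree}) is composed only of items that appear in $FP_1$, and its count in the database represented by $FP_1$ is at least {\it min-count} $= C^*$. Since $FP_1$ is the FP-tree of $DB'_1$, this says both that $\alpha \subseteq I'$ and that the count of $\alpha$ in $DB'_1$ is $\geq C^*$.

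The key step is then to show that the item-filtering performed in the first pass does not distort this count, that is, the count of $\alpha$ in $DB'_1$ equals $C_1(\alpha)$. Because $\alpha \subseteq I'$, for every transaction $T_i \in DB_1$ we have $\alpha \subseteq T'_i = T_i \cap I'$ if and only if $\alpha \subseteq T_i$, as removing items outside $I'$ cannot delete any member of $\alpha$. Consequently the count of $\alpha$ in $DB'_1$ coincides with its count in $DB_1$, which by definition is $C_1(\alpha) = C(\alpha \cup 1)$. Combining with the previous step yields $C(\alpha \cup 1) \geq C^* = \xi \times |DB|$, so {\it support}$(\alpha \cup 1) = C(\alpha \cup 1)/|DB| \geq \xi$, as required.

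I expect the main obstacle to be precisely this filtering step: one must be careful that pruning the items not frequent in the target class (those outside $I'$) does not silently drop occurrences of a surviving itemset and thereby let a truly infrequent itemset slip into {\it TIS-tree}. The containment $\alpha \subseteq I'$, which holds because FP-growth can only emit items physically present in $FP_1$, is exactly what rules this out and makes the count-preservation argument go through.
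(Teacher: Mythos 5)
Your proposal is correct and follows essentially the same route as the paper's proof: both reduce the claim to the correctness of the classical FP-growth applied with min-count $C^* = \xi \times |DB|$, conclude $C_1(\alpha) \geq C^*$, and use the identity $C_1(\alpha) = C(\alpha \cup 1)$. You additionally verify a detail the paper glosses over --- that the first-pass filtering to $I'$ preserves counts, since $\alpha \subseteq I'$ implies $\alpha \subseteq T_i \cap I'$ iff $\alpha \subseteq T_i$ --- which is a worthwhile tightening but not a different argument.
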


\begin{proof}
Since {\it TIS-tree} is created by applying the exact three steps of the classical FP-growth algorithm on $DB_1$ with {\it min-count} of $\xi \times |DB|$, $C_1(\alpha) \geq \xi \times |DB|$ holds for each itemset $\alpha$ represented by a node in the resulting {\it TIS-tree}. Since $C_1(\alpha) = C(\alpha \cup 1)$, we get $C(\alpha \cup 1) \geq \xi \times |DB|$, and therefore {\it support}$(\alpha \cup 1) = C(\alpha \cup 1)/|DB| \geq \xi$. 
\end{proof}

\begin{lemma}
At the end of the GFP-growth step of algorithm execution, for each itemset $\alpha \in${\it TIS-tree}, {\it TIS-tree}$(\alpha)${\it .g-count } $= C_0(\alpha)$.
\end{lemma}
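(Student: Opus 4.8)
The plan is to reduce this lemma directly to the already-proved correctness of the GFP-growth procedure (Theorem~1). The key observation is that Lemma~3 is almost a special case of that theorem: both assert that after GFP-growth runs, the \emph{g-count} field of every node equals the true count of the corresponding itemset in the database represented by the input FP-tree. So first I would identify precisely which FP-tree is passed to GFP-growth in the Minority-Report algorithm and which database it represents. Reading the pseudo-code, the call is \texttt{GFP-growth}(\emph{TIS-tree}, $FP_0$), so the input FP-tree is $FP_0 = \text{FP-tree}(DB'_0)$.

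Next I would apply Theorem~1 with this instantiation. Theorem~1 tells us that at the end of execution, \emph{TIS-tree}$(\alpha)$\emph{.g-count} $= C(\alpha)$ for each $\alpha \in$ \emph{TIS-tree}, where $C(\alpha)$ here denotes the count of $\alpha$ in the database represented by $FP_0$. The remaining work is therefore purely a bookkeeping identification: I must show that the count of $\alpha$ in the database represented by $FP_0$ equals $C_0(\alpha)$ as defined in the paper. This splits into two steps. First, $FP_0$ represents $DB'_0$, so by Theorem~1 the g-count equals the count of $\alpha$ in $DB'_0$. Second, I would show this count in $DB'_0$ equals the count in $DB_0$, namely $C_0(\alpha)$.

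The one genuine subtlety — and the step I expect to be the main obstacle — is the gap between $DB'_0$ and $DB_0$. Recall $DB'_0 = \{T'_i = T_i \cap I' : T_i \in DB_0\}$, where $I' = \{a_k \in I : C_1(a_k) \geq C^*\}$; that is, $DB'_0$ is $DB_0$ projected onto the frequent-in-$DB_1$ item set $I'$. The count of $\alpha$ can differ between the two databases only if $\alpha$ contains an item outside $I'$. So the crux is to argue that every $\alpha \in$ \emph{TIS-tree} satisfies $\alpha \subseteq I'$. This follows because \emph{TIS-tree} was built by running FP-growth on $FP_1 = \text{FP-tree}(DB'_1)$, and $DB'_1$ likewise contains only items from $I'$; hence every itemset inserted into \emph{TIS-tree} uses only items of $I'$. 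Given $\alpha \subseteq I'$, projecting each transaction of $DB_0$ onto $I'$ neither adds nor removes occurrences of $\alpha$, because a transaction $T_i$ contains $\alpha$ if and only if $T_i \cap I'$ contains $\alpha$. Therefore the count of $\alpha$ in $DB'_0$ equals its count in $DB_0$, which is $C_0(\alpha)$.

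Combining these observations, I would conclude: by Theorem~1 the g-count equals the count of $\alpha$ in the database represented by $FP_0$ (i.e.\ in $DB'_0$), and by the projection argument this count equals $C_0(\alpha)$; hence \emph{TIS-tree}$(\alpha)$\emph{.g-count} $= C_0(\alpha)$ for every $\alpha \in$ \emph{TIS-tree}, completing the proof. The note appended to Theorem~1 (that the database represented by an FP-tree drops infrequent items) is exactly the hint that the projection step is where care is needed, so I would make the $\alpha \subseteq I'$ argument explicit rather than leave it implicit.
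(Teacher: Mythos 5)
Your proposal is correct and follows essentially the same route as the paper: both reduce the lemma to Theorem~1 applied to the call GFP-growth(\emph{TIS-tree}, $FP_0$), and both justify the identification of the represented count with $C_0(\alpha)$ by noting that \emph{TIS-tree} was extracted from $FP_1$ and that $FP_0$ and $FP_1$ were built from the same item set $I'$. Your write-up merely makes explicit (via $\alpha \subseteq I'$ and the projection argument on $DB'_0$ versus $DB_0$) what the paper states more tersely, which is a welcome sharpening rather than a different proof.
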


\begin{proof}
The algorithm mines the FP-tree created for $DB_0$ ($FP_0$) using the guided FP-growth procedure and the TIS-tree as a guide. For each $\alpha \in${\it TIS-tree}, $FP_0$ correctly represents $C_0(\alpha)$ since all items in $\alpha$ were taken into consideration when building $FP_0$ from $DB_0$. This is true since {\it TIS-tree} was extracted from $FP_1$ and the same items were taken into consideration while building $FP_0$ and $FP_1$. Therefore, according to Theorem 1 above (GFP-growth correctness), each {\it TIS-tree}$(\alpha)$.g-count value reported for a rule $\alpha \rightarrow 1$ is equal to $C_0(\alpha)$. 
\end{proof}

\begin{theorem}
Minority-Report Algorithm correctness

The Minority-Report algorithm generates all and only the rules $\alpha \rightarrow 1$ which confirm to the required min-support and min-confidence constraints, where $1$ is the rare class. The support and confidence reported by the algorithm for each of these rules are their support and confidence in $DB$.
\end{theorem}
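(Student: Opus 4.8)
The plan is to assemble the statement from Theorem~\ref{th:MR} together with a direct reading of the final pruning loop, using the single arithmetic fact that the two class-subsets partition the database. By definition $DB_0$ and $DB_1$ are disjoint with $DB = DB_0 \cup DB_1$, so every transaction either contains the class item $1$ or does not; consequently $C(\alpha) = C_1(\alpha) + C_0(\alpha)$ for every itemset $\alpha$. This identity is the only genuinely new ingredient: everything else is already supplied by Theorem~\ref{th:MR} (and hence by its three lemmas and by Theorem~1). So the whole proof reduces to checking three things about the loop, in order.

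First I would verify that the confidence value computed in the loop is the true confidence of the candidate rule. Fix $\alpha \in$ {\it TIS-tree}. By Theorem~\ref{th:MR}, {\it TIS-tree}$(\alpha)${\it .count} $= C_1(\alpha)$ and {\it TIS-tree}$(\alpha)${\it .g-count} $= C_0(\alpha)$. Substituting into the loop's expression, using $C(\alpha) = C_1(\alpha)+C_0(\alpha)$ and $C_1(\alpha) = C(\alpha \cup 1)$, gives
\[
\text{\it conf} = \frac{C_1(\alpha)}{C_1(\alpha)+C_0(\alpha)} = \frac{C_1(\alpha)}{C(\alpha)} = \frac{\text{support}(\alpha \cup 1)}{\text{support}(\alpha)},
\]
which is exactly the confidence of $\alpha \rightarrow 1$ as defined. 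Hence the test $\text{\it conf} \geq \text{\it minconf}$ is equivalent to requiring that $\alpha \rightarrow 1$ meet the minimum-confidence constraint.

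Next I would establish the ``all and only'' claim. By Theorem~\ref{th:MR}, the itemsets enumerated by the loop (the members of {\it TIS-tree}) are \emph{exactly} those $\alpha$ with $\text{support}(\alpha \cup 1) \geq \xi$, i.e. exactly the antecedents whose rule meets the minimum-support constraint. Among these, the loop emits $\alpha \rightarrow 1$ precisely when the confidence test passes, which by the previous step is precisely when the minimum-confidence constraint holds. Combining the two, the emitted rules are exactly $\{\alpha \rightarrow 1 : \text{support}(\alpha \cup 1) \geq \xi \text{ and } \text{confidence}(\alpha \rightarrow 1) \geq \text{\it minconf}\}$: any rule meeting both constraints has its antecedent in {\it TIS-tree} and passes the test (completeness), and any emitted rule comes from a member of {\it TIS-tree} and has passed the test (soundness).

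Finally I would check the reported quantities. For a generated rule $R = (\alpha \rightarrow 1)$ the algorithm sets $\text{support}(R) = \text{\it TIS-tree}(\alpha).\text{\it count}/|DB| = C_1(\alpha)/|DB| = C(\alpha \cup 1)/|DB|$, which is the definition of the support of $\alpha \rightarrow 1$, and $\text{confidence}(R) = \text{\it conf}$, shown above to equal the true confidence; so both reported values are their exact values in $DB$. I do not expect a deep obstacle here, since Theorem~\ref{th:MR} already carries the difficult support-membership bookkeeping and the correctness of the two counters. The only real content of this theorem is the conceptual observation that the support constraint is enforced entirely at the {\it TIS-tree}-construction stage while the confidence constraint is enforced in the loop, the two being glued together by the partition identity $C(\alpha)=C_1(\alpha)+C_0(\alpha)$; the main care I would take is to invoke the ``all and only'' half of Theorem~\ref{th:MR} correctly so that neither completeness nor soundness is silently dropped.
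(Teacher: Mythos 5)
Your proposal is correct and follows essentially the same route as the paper's own proof: invoke Theorem~\ref{th:MR} for the ``all and only'' membership of {\it TIS-tree} and the correctness of the two counters, then check that the final loop computes the true confidence and therefore emits exactly the rules satisfying both constraints, with correct reported support and confidence. The only difference is one of explicitness --- you spell out the partition identity $C(\alpha)=C_1(\alpha)+C_0(\alpha)$ behind the confidence formula, which the paper leaves implicit under ``according to the definition of confidence'' --- and that is a welcome clarification rather than a different argument.
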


\begin{proof}
According to Theorem~\ref{th:MR}, at the end of the GFP-growth step of the Minority-Report algorithm execution, TIS-tree contains all and only the itemsets $\alpha$ for which {\it support}$(\alpha \cup 1) = C(\alpha \cup 1)/|DB| \geq \xi$.  Since {\it support}$(\alpha \rightarrow 1) = support(\alpha \cup 1)$, all and only frequent rules $\alpha \rightarrow 1$ are considered for generation at that point in the algorithm. According to the definition of confidence and to Theorem~\ref{th:MR}, the confidence calculated for each rule $\alpha \rightarrow 1$ at the next step of the algorithm is indeed {\it confidence}$(\alpha \rightarrow 1)$. Therefore, the algorithm reports all and only the rules $\alpha \rightarrow 1$ which confirms to the support and confidence constraints, and reports their correct support and confidence attributes. 
\end{proof}

\subsection{Example of Minority-Report Algorithm using guided-FP-growth}
This example illustrates an execution of the Minority-Report algorithm, using the GFP-growth algorithm. Let DB be as given in Table~\ref{table:DBMR}. 

Let us  now follow the execution of the Minority-Report algorithm, with DB as the input dataset, class $1$ as the target class, min-support $\xi = 0.125$ (i.e. min-count $C^* = 1$), and min-confidence = $0.2$.
The first pass over the dataset calculates $I^\prime$, which is the list of items that count above $C^*$ in $DB_1$ $(I^\prime = \{a_k \in  I : C_1(a_k)\geq C^*\})$. With the DB shown in Table~\ref{table:DBMR}, $I^\prime= \{f, c, b, m\}$. Items $a, d, g, h, i, l, n, o, p$ are removed due to not passing the threshold in class $1$.
Observe that this step is a significant optimization of the algorithm in both time and memory terms, as the fact that some items are not used for building the FP-trees leads to much smaller trees. Smaller trees mean less memory and less time for their building and mining. Observe also that all these items pass the min-support threshold in DB, so the well-known solution which executes FP-growth for DB would end up with a much larger FP-tree, including all of them.
	
\begin{table}
\caption{A database example}
\label{table:DBMR}
\begin{tabular}{lll}
\hline\noalign{\smallskip}
 TID&Items&Class\\
\noalign{\smallskip}\hline\noalign{\smallskip}
100& 	f, a, c, d, g, i, m, p &	0\\
200& 	a, b, c, f, l, m, o 	&0\\
300& 	b, f, h, j, o 		&0\\
400& 	b, c, k, s, p 		&0\\
500& 	a, f, c, e, l, p, m, n &	0\\
600&	f, m			&1\\
700& 	c			&1\\
800&	b			&1\\
\noalign{\smallskip}\hline
\end{tabular}
\end{table}

The second pass through the dataset creates the two FP-trees, $FP_0$ and $FP_1$.
$FP_1$ is presented in Figure~\ref{fig:ex1}, while $FP_0$ is presented in Figure~\ref{fig:ex2}. 
Next, FP-growth is applied on $FP_1$, creating {\it TIS-tree}. {\it TIS-tree} is presented in Figure~\ref{fig:ex3}. 

\begin{figure}
\centering
\includegraphics[width=0.34\textwidth]{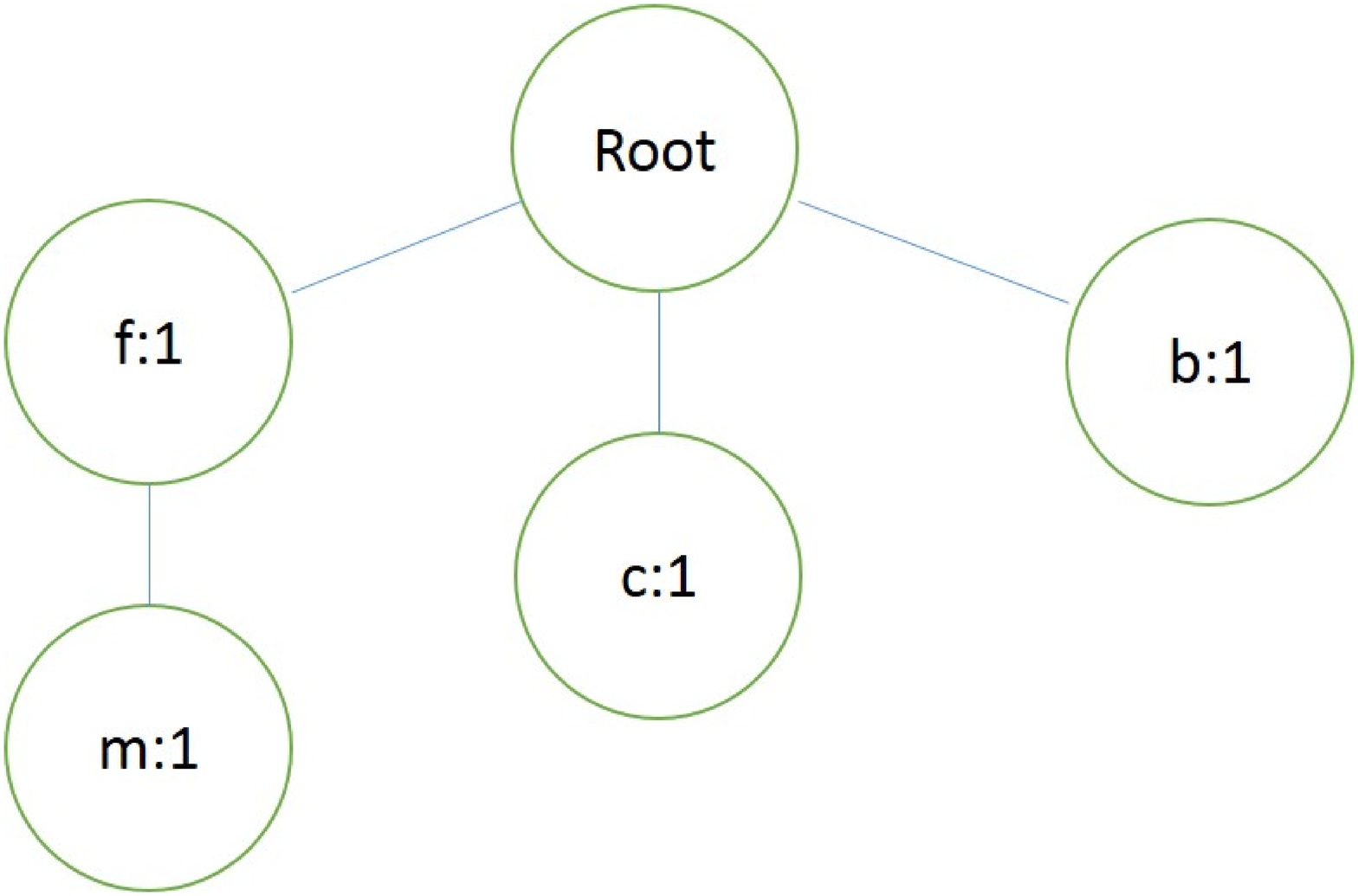}
    \caption{\label{fig:ex1}Tree $FP_1$ of example of Minority-Report Algorithm using guided-FP-growth  
 }
\end{figure}

\begin{figure}
 \centering
\includegraphics[width=0.8\textwidth]{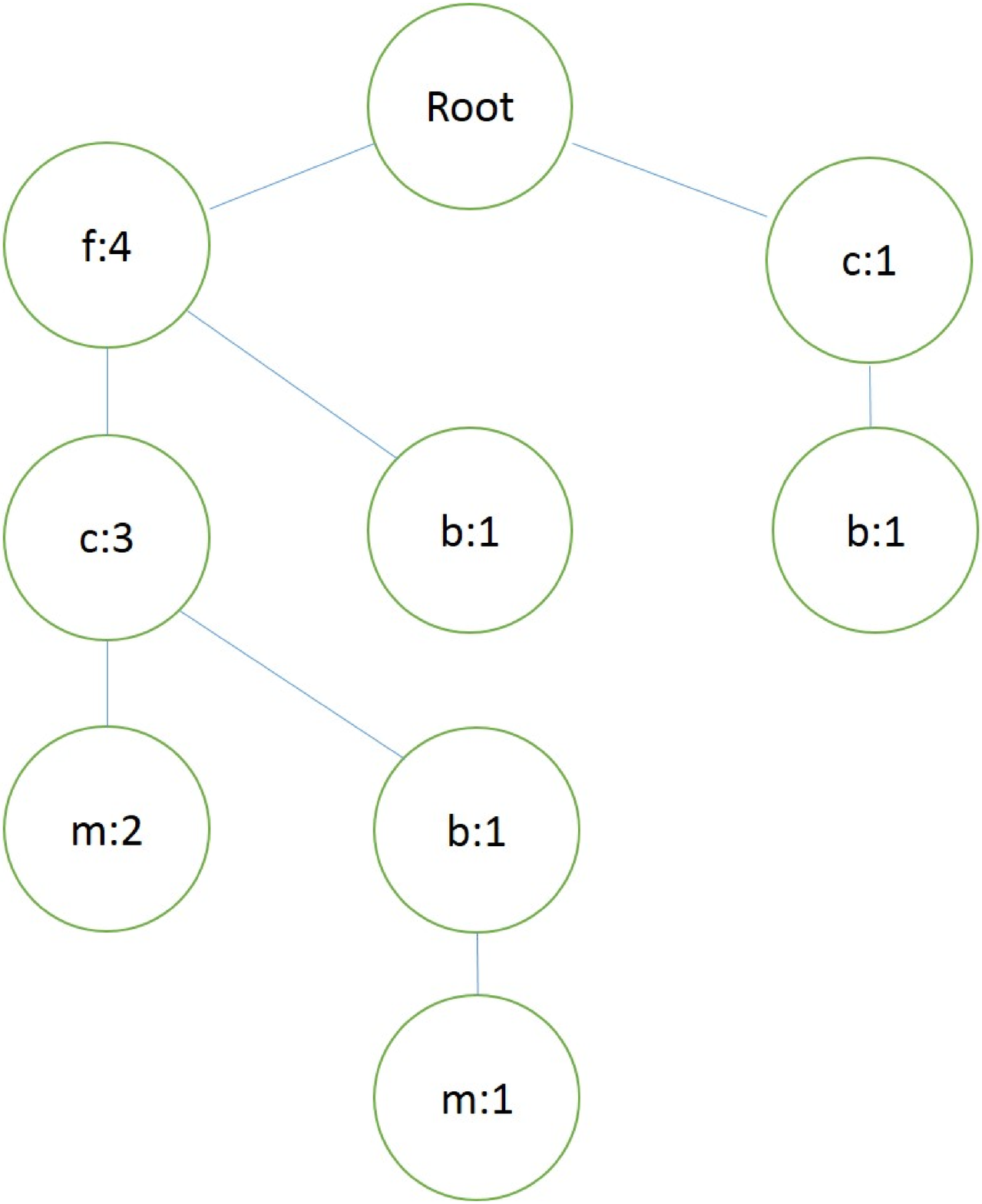}
    \caption{\label{fig:ex2} Tree $FP_0$ of example of Minority-Report Algorithm using guided-FP-growth  
 }
\end{figure}

\begin{figure}
 \centering
\includegraphics[width=0.8\textwidth]{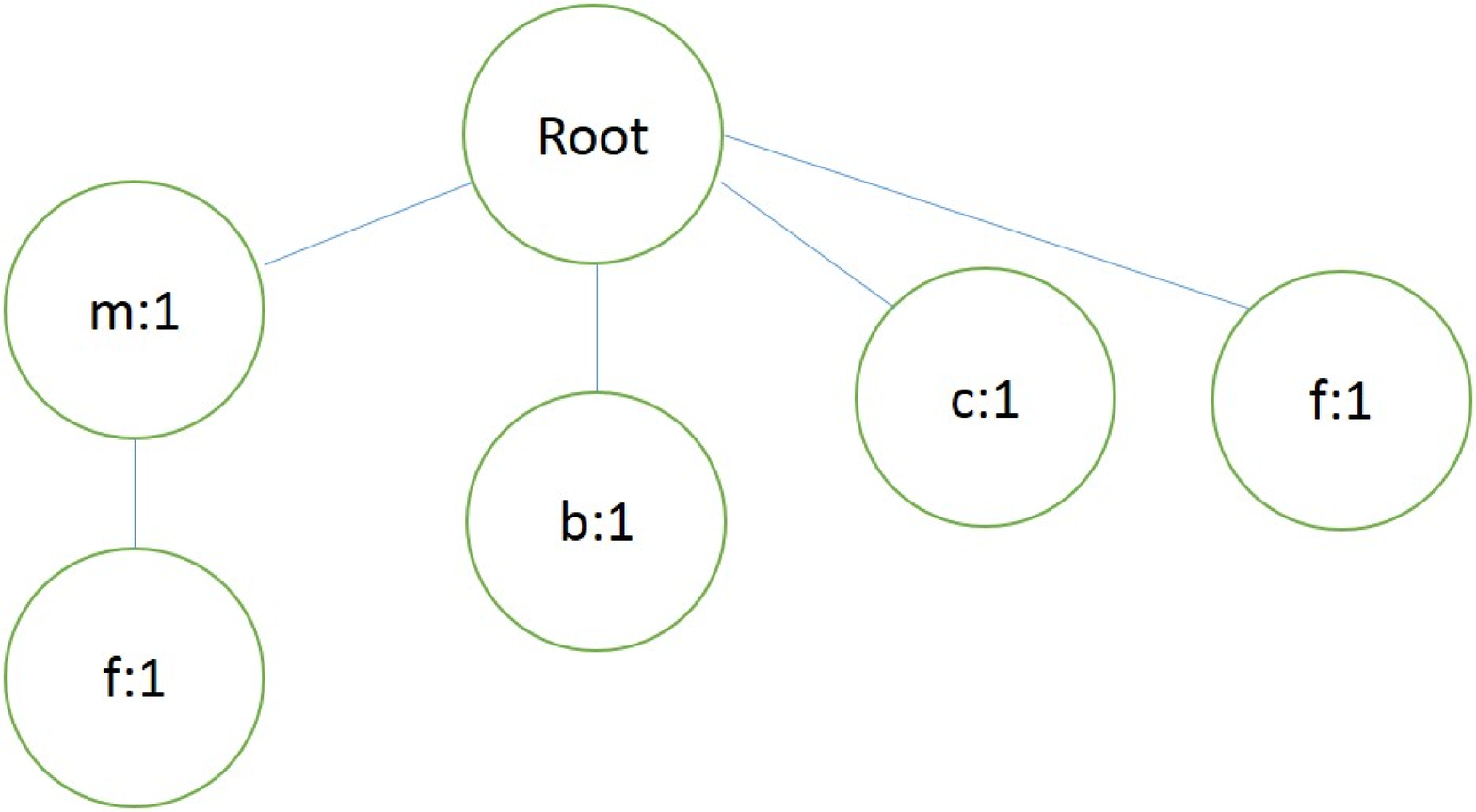}
    \caption{\label{fig:ex3}The {\it TIS-tree} of example of Minority-Report Algorithm using guided-FP-growth  
 }
\end{figure}
Now, the Minority-Report algorithm calls the GFP-growth procedure as follows:
{\it call GFP-growth(TIS-tree, $FP_0$)}. The GFP-growth loops through the items $I^\prime= \{f, c, b, m\}$, as follows:
\begin{enumerate}
\item For $m${: \it TIS-tree}$(\{m\})${\it .g-count }$=3$, {\it c-Tree}$(m)=\{(f:3)\}$ 
call GFP-growth({\it TIS-tree}$(\{m\})$, $\{(f:3)\})$
(observe that $b$ and $c$ are not included in the conditional tree, since they do not appear in {\it TIS-tree}$(\{m\})$) \\
GFP-growth$(${\it TIS-tree}$(\{m\})$, $\{(f:3)\})$ performs a single-iteration loop for $f$ as follows: \\
{\it TIS-tree}$(\{m, f\})${\it .g-count }$=3$ 
(no need for a recursive call since {\it TIS-Tree}$(\{m, f \})$ has no children)
\item For $b${: \it TIS-tree}$(\{b\})${\it .g-count }$=3$
(no need for a recursive call since {\it TIS-Tree}$(\{b\})$ has no children)
\item For $c${: \it TIS-tree}$(\{c\})${\it .g-count }$=4$
(no need for a recursive call since {\it TIS-Tree}$(\{c\})$ has no children)
\item For $f${: \it TIS-tree}$(\{f\})${\it .g-count }$=4$ 
(no need for a recursive call since {\it TIS-Tree}$(\{f\})$ has no children)
\end{enumerate}

Now, {\it TIS-tree} looks as presented in Figure~\ref{fig:ex5}. 

\begin{figure}
 \centering
\includegraphics[width=0.8\textwidth]{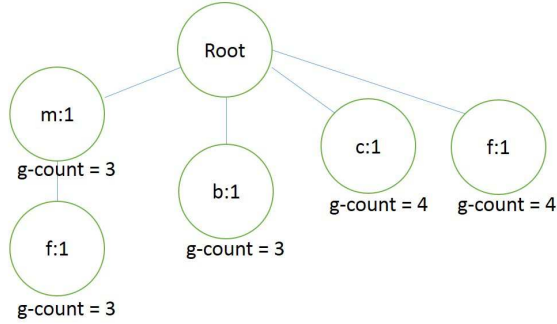}
    \caption{\label{fig:ex5} Second {\it TIS-tree} of example of Minority-Report Algorithm using guided-FP-growth
 }
\end{figure}
At the end of the GFP-growth procedure, {\it g-counter} is updated at all nodes of {\it TIS-tree}.

The last step of the Minority-Report algorithm turns each node in {\it TIS-tree} into a rule, calculates the confidence from the {\it counter} and {\it g-counter} fields of the node, and checks against the min-confidence threshold (which is 0.2 in our example):
\begin{enumerate}
\item Confidence$(\{m\}) = 1/(1+3) = 0.25$
\item	Confidence$(\{b\}) = 1/(1+3) = 0.25$
\item	Confidence$(\{c\}) = 1/(1+4) = 0.2$
\item	Confidence$(\{f\}) = 1/(1+4) = 0.2$
\item	Confidence$(\{m, f\}) = 1/(1+4) = 0.2$
\end{enumerate}

Turns out that all pass the min-confidence threshold and therefore all are turned into rules which are reported along with their respective support and confidence: $\{m\}\rightarrow 1, \{b\} \rightarrow 1, \{c\} \rightarrow 1, \{f\} \rightarrow 1, \{m, f\} \rightarrow 1$.

\subsection{Numerical results}

The computational performance of frequent itemsets mining algorithms is affected  by several parameters: the min-support threshold which is a parameter of the method, the total number of different items in a dataset, the number of transactions, and the statistical characteristics of the data. By 'statistical characteristics' we mean that we view a dataset as a realization of a random process that follows some probabilistic model. For instance, in the simulations below we use a probabilistic model where each item in a transaction is a Bernoulli random variable with success probability $p_X$. Given a probabilistic model, one can determine other data characteristics such as the average or maximum transaction length. This in turn affects the computational performance of frequent itemsets mining algorithms. In general, one can say that the running time of frequent mining algorithms increases as the number of items and number of transactions increases. However, this depends also on the statistical characteristics of the data at hand; see for instance \cite{heaton2016comparing}. 

In this part of our study we are interested in understanding the data scenarios where using the Minority Report algorithm, and hence the GFP-growth algorithm, leads to improvement in computational performance. In this use-case, there is another important parameter, the class distribution. By class distribution we mean the level of imbalance of the response variable, or the target probability which we denote by $p_Y$, where $Y=$'$1$' is the target variable. Specifically, we consider various scenarios where we fix some parameters and calculate the running time of the algorithm. Each such scenario was repeated $20$ times (Monte Carlo experiments) and the average of all running times over the $20$ simulations is reported.  The simulations were run over a Linux Virtual-Machine, m4.16xlarge Amazon Web Services (AWS) instance with 256 GB RAM. We used the $C$ implementation of FP-growth by Christian Borgelt (\cite{borgelt2012frequent}, \cite{borgelt2005implementation}) and altered it to perform the Minority Report algorithm. The performance figures reported in this section are based on a partial GFP-growth implementation.  Specifically it does not reflect the potential saving that would result from the enhancements which save work and reduce data according to the content of the sub TIS-tree, as described above.

Figures~\ref{fig:sim1} (a), (b) and (c) present the running time (in seconds) of the FP-growth (a), GFP-growth (Minority Report; b) and the ratio between their running time (c) as a function of the number of the target-class {\it ruleitems} which appeared in the data. Here $p_X=0.125$, $p_Y=0.01$ and min-support is set to $5\times 10^{-5}$. Each plot includes three lines corresponding to different transaction numbers: $25,000$, $50,000$ and  $100,000$. Each of such lines also delivers information regarding the number of possible different items that ranges from $60$ to $100$. One can see that for a fixed number of transactions, the running time of the algorithms increases as the number of target-class {\it ruleitems} increases, which goes together with increasing number of items and increasing average transaction length. Figure~\ref{fig:sim1} (c) suggests that the GFP-growth is faster than the FP-growth, where the improvement in this case ranges from about $10$ up to $80$ times faster. 

Figures~\ref{fig:sim1} (d), (e) and (f) display a similar scenario, now with $p_Y=0.1$ and min-support of $5\times 10^{-4}$. There is an improvement of GFP-growth but not as substantial as for the case where $p_Y=0.01$ which resembles very unbalanced data.

\begin{figure}
 \centering
\includegraphics[width=0.85\textwidth]{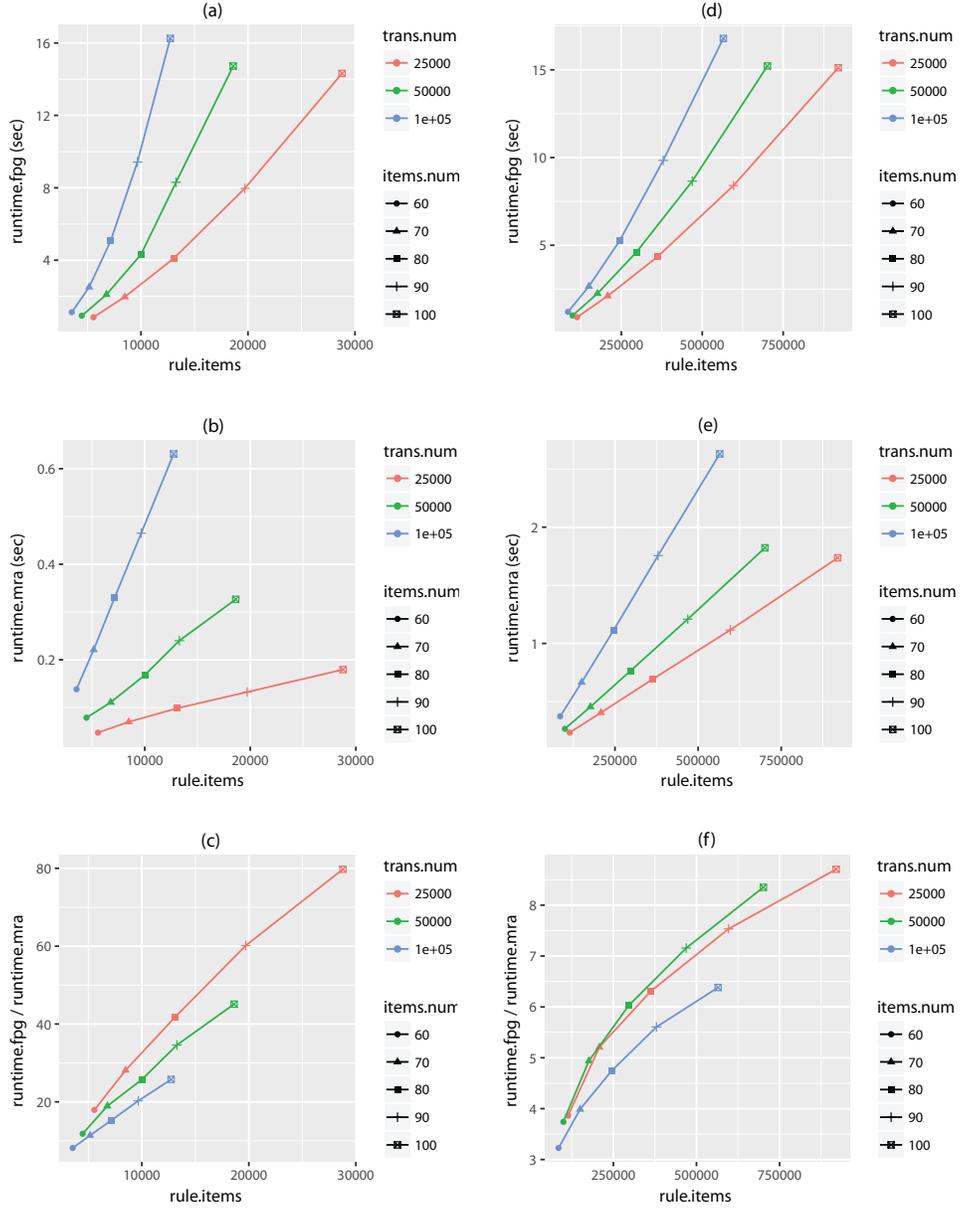}
    \caption{\label{fig:sim1} Simulation results with $p_X=0.125$, $p_Y=0.01$ and min-support of $5\times 10^{-5}$. \textbf{(a)}: running time of FP-growth as a function of number of target-class {\it ruleitems}. \textbf{(b)}: running time of GFP-growth as a function of  number of target-class {\it ruleitems}.  \textbf{(c)}: ratio of running time of FP-growth to GFP-growth as a function of  number of target-class {\it ruleitems}.
Simulation results with  $p_X=0.125$, $p_Y=0.1$ and min-support of $5\times 10^{-4}$. \textbf{(d)}: running time of FP-growth as a function of number of target-class {\it ruleitems}. \textbf{(e)}: running time of GFP-growth as a function of  number of target-class {\it ruleitems}.  \textbf{(f)}: ratio of running time of FP-growth to GFP-growth as a function of  number of target-class {\it ruleitems}  
 }
\end{figure}

In Figures~\ref{fig:real} we present results of applying the algorithms to real data. We consider the 'Census income' dataset from \url{http://archive.ics.uci.edu/ml/datasets/Adult}. We removed rows with missing information as well as the columns: 'capital.loss', 'capital.gain' and 'education.num'.
The 'fnlwgt' column was discretized into a categorical variable with four categories. The age and hours.per.week columns were discretized into categorical variables of five and six categories respectively: age=('17-25','26-35','36-45','46-65','66+'), hours.per.week=('1-10','11-20','21-30','31-40','41-50','51+').
The number of columns in the database after our manipulations are 12, with each column taking several possible categories, summing up to 115 possible different items. For the target we use the variable 'salary',  which can be either $\leq 50K$ or $>50K$. The resulting database had approximately $30,000$ rows, with a  distribution of the target class being $75\%$ $salary\leq 50K$ and $25\%$ $salary>50K$.  In order to create an imbalanced problem, we sampled for each test $22,500$ rows with the number of rows with salary more than $50K$ set to $22,500 \times p_Y$, and the rest of the rows had salary less than or equal to $50K$. The results presented in Figures~\ref{fig:real} summarize the mean of $20$ such test samplings for each scenario. As in the simulation results above, our main observation is that the target probability has the most substantial effect on the running time of the GFP-growth algorithm and hence its improvement over the FP-growth which can be as high as $50$ times faster. 

\begin{figure}
 \centering
\includegraphics[width=0.50\textwidth]{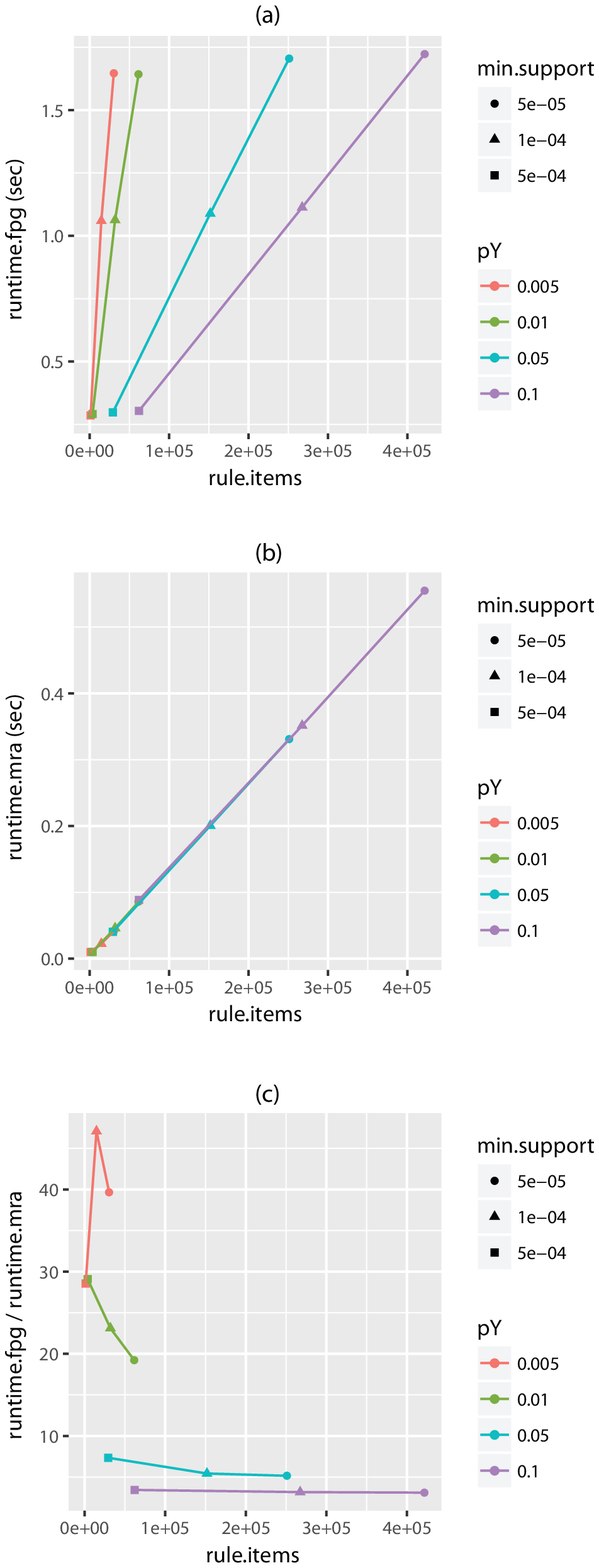}
    \caption{\label{fig:real} Real data results. \textbf{(a)}: running time of FP-growth as a function of number of target-class {\it ruleitems}. \textbf{(b)}: running time of GFP-growth as a function of  number of target-class {\it ruleitems}. \textbf{(c)}: ratio of running time of FP-growth to GFP-growth as a function of  number of target-class {\it ruleitems}  
 }
\end{figure}

Note that the min-support threshold is a tuning parameter of the method and its 'optimal' choice should be defined according to the question of interest. For instance, for a supervised learning problem such as classification, it makes sense that the min-support threshold will control the bias-variance trade-off of a method and as a results will affect its accuracy (generalization ability). Such an observation was made in \cite{coenen2005obtaining}. In our case we are concerned with an exploratory data analysis and are interested in mining \textbf{all} ruleitems so an 'optimal' choice of min-support is not relevant.
%

%
%
\section{Extensions and future work}\label{s:app}

In the previous section we demonstrated the application of the GFP-growth algorithm to a specific use case: mining {\it ruleitmes} from imbalanced data. However, the GFP-growth can be used for solving a various of mining problems, some examples follow. 

\subsection{Mining frequent itemsets and optimal classification-rules}

\cite{li2006searching} presents an algorithm which follows the principle of iterative candidate generation process (\cite{agrawal1994fast}).  The paper uses the itemset tree of \cite{kubat2003itemset} in the process of counting the frequency of the candidate itemsets. In particular, it introduces an enhancement in the context of their algorithm for frequent-itemset mining, which is to maintain a per candidate list which records all occurrences of the candidate in the itemset tree, and use the list for mining its direct children candidates. It requires collecting the data for each and every target-itemset from many different locations in the complete itemset tree.

\cite{yakout2007mining} presents an algorithm which follows the Apriori principle. The paper uses the itemset tree of \cite{kubat2003itemset}, and \cite{fournier2013meit} in the process of counting the frequency of the candidate itemsets. \cite{yakout2007mining} describes and uses a procedure which uses the itemset tree for counting frequency of the candidate itemsets which share the leading items and differ only in the last item. The procedure is called per discovered frequent itemset, and depending on the items comprising that itemset, each invocation of the procedure potentially traverses a considerable part of the itemset tree. The different invocations may overlap, meaning that the same parts of the itemset tree are mined a number of times during the different invocations. Still the authors show that in some of the scenarios this procedure performs better than FP-growth.

One way to potentially boost the algorithm of \cite{li2006searching} and \cite{yakout2007mining} by eliminating the potential repetitive mining overlap and fragmented information collection which are described above, is to replace the per itemset procedure by a per-level single call to the guided FP-growth procedure as follows. At each level, use the Apriori candidate-generation procedure and create a tree representing the candidates. Count the frequency of all the candidates by applying a single invocation of the guided FP-growth procedure with the candidate-representing {\it TIS-tree} as its guide. This procedure provides the advantage that no resources are wasted on counting the frequency of non-candidate itemsets. Note that in this case the compact representation of the dataset is created using an FP-tree instead of an itemset tree as used in \cite{yakout2007mining}. This adds two advantages, which are the performance and compactness due to building the tree from the frequent-items only from the start, and optimizing the order in which the items are used for building the tree. This comes at the expense of an additional pass through the database and less reusability of the tree for potential other required queries. 

Optimal classification-rule set were introduced in \cite{li2002mining} who showed that it has the same predictive power of the complete class-association rule set when used for building a classifier. The optimal rule-set includes only rules such that no rule built of the same consequence and a subset of the antecedent has an identical or better confidence. \cite{ghanem2014towards} presents a scheme for mining the optimal classification-rule set and then shows how to use this set for creating a robust classifier. \cite{ghanem2011edp} presents a distributed / parallel scheme for optimal-rule discovery, using a similar concepts.

Both \cite{ghanem2011edp} and \cite{ghanem2014towards} refer to \cite{kubat2003itemset}, \cite{yakout2007mining} and \cite{fournier2013meit} as the means for mining the next layer of rules. Therefore, the enhancement suggested above for \cite{yakout2007mining}, which involves replacing the repetitive and overlapping per rule invocation of a mining procedure by a per-layer invocation of the guided FP-growth procedure, is valid also for \cite{ghanem2011edp} and \cite{ghanem2014towards}, leading to potential improvement in their performance. Clearly, as demonstrated above, the GFP-growth algorithm will be of great use when dealing with imbalanced data. Specifically, we see a potential contribution by taking advantage of the ideas of \cite{li2002mining}, \cite{ghanem2011edp} and \cite{ghanem2014towards} and combine them with the contributions of the current paper, which may lead to better performance of the mining algorithms. 
\subsection{Incremental mining of frequent itemsets}
Many interesting use-cases for frequent-itemset mining involve datasets which are dynamically updated, with the most important case being addition of new data. Example domains in which such use-cases exist are e-commerce and data-streaming. Especially in such domains, fast response times are of importance as the insights need to reflect the current situation.

A data-set update may result in addition, deletion, or support-change of existing frequent itemsets. The straightforward way to get the updated information is to mine the entire amended dataset from scratch. However, significant time saving can be achieved when taking advantage of the previously gathered information.

Previous works present several different incremental algorithms aiming as much as possible to limit the processing to the newly added data in order to update the already-mined frequent itemsets. This allows minimization of the resources invested in this effort, such as time and memory.

A considerable amount of work was dedicated to incremental derivations of FP-growth. Examples for such incremental FP-growth derivation are \cite{ezeife2002mining}, \cite{gyorodi2003mining}, \cite{cheung2003incremental}, \cite{ma2004efficient}, \cite{hong2006fast}, \cite{lin2009pre}, \cite{pradeepini2010tree}, \cite{lin2010using}, \cite{totad2012batch}, and \cite{lin2013adminer}. The main challenge these papers attempt to address is the fact that an FP-tree does not contain information about non-frequent items appearing in the original-database transactions, which means that if an infrequent item becomes frequent due to the update, the original-database needs to be scanned again. \cite{totad2012batch} suggests creating a separate FP-tree (or a similar structure) for the new incremental dataset, and shows that this tree can be efficiently combined with the one representing the original dataset.

Most of these works assume that the resulting FP-tree needs to be mined again in order to find the new frequent item-sets created due to the update. \cite{ma2004efficient} suggests addressing this issue by maintaining information about each of the conditional FP-trees created during the FP-growth procedure in a hyper-tree in order to re-execute only those that are affected by the update.

We suggest using the guided FP-growth algorithm in order to take advantage of the frequent-itemset list already mined from the original dataset, and efficiently mine the updated frequent itemset list. The main idea is to perform guided mining of the (potentially huge) original FP-growth tree, focusing only on itemsets which may potentially become frequent. The itemsets potentially becoming frequent are those that are not frequent in the original dataset but are frequent on the incremental (new) dataset. This would enable updating the frequency of the itemsets already appearing in the original frequent-itemset tree.

\section{Summary}\label{s:con}
The main contribution of this work is the development of the GFP-growth (Guided FP-growth) algorithm, an FP-growth based algorithm for efficient multitude-targeted mining. The GFP-growth procedure serves for mining the support of multitude pre-specifed itemsets from an FP-tree and mines only the relevant parts of the FP-tree, in a single and partial FP-growth process. Our choice to base our algorithm on the original FP-growth was made due to the  popularity, availability, and many actual-use implementations of the  FP-growth algorithm. Indeed, our GFP-growth algorithm can take advantage of the many  FP-growth improvements that have been suggested in the literature. Such improvements will lead to additional time and memory costs reduction. More so, we have considered our contribution in the context of both targeted and constraint mining, which gives rise to further potential advantages. We demonstrated that the GFP-growth procedure is a very fast and generic tool, which can be applied to many different purposes, and provided theoretical results concerning its correctness.

An additional contribution of this work is the development of the Minority-Report Algorithm that uses the GFP-growth for boosting performance when generating the minority-class rules from imbalanced data. In that respect, we studied in detail the problem of mining minority-class rules from imbalanced data, a scenario that appears in many real-life domains such as medical applications, failure prediction, network and cyber security, and maintenance. We proved some theoretical properties of the Minority-Report Algorithm and demonstrated its performance gain using simulations and real data.

Finally, looking ahead for future research, we discussed how mining of optimal class association-rules
and how incremental itemset mining can potentially benefit from the use of the GFP-growth algorithm. Additional subjects for future work are exploring whether an efficient method for mining class association-rules from imbalanced data can be devised using the Minorty-Report algorithm, and whether the GFP-growth algorithm can be further optimized.


%
%

\paragraph{Acknowledgments}
This work was supported by the Israeli Science Foundation grant number 387/15.

\bibliographystyle{unsrt}
\bibliography{bib}

%
%

\end{document}